\colorlet{MyRed}{DarkRed!50!Crimson}
\colorlet{MyBlue}{DodgerBlue!75!black}
\colorlet{MyGreen}{DarkGreen}
\colorlet{MyViolet}{DarkMagenta}
\colorlet{MyLightBlue}{DodgerBlue!20}
\colorlet{MyLightGreen}{MyGreen!20}
\colorlet{PrimalColor}{MyBlue}
\colorlet{PrimalFill}{MyLightBlue}
\colorlet{DualColor}{MyRed}
\colorlet{AlertColor}{MyRed}	
\colorlet{BadColor}{MyRed}	
\colorlet{GoodColor}{MyGreen}	
\colorlet{LinkColor}{MediumBlue}	
\colorlet{MacroColor}{MyRed}
\colorlet{RevColor}{MediumBlue}	
\newcommand{\EMAIL}[1]{\email{\href{mailto:#1}{#1}}}
\crefname{algo}{Algorithm}{Algorithms}
\crefname{assumption}{Assumption}{Assumptions}
\crefname{figure}{Fig.}{Figs.}
\crefname{model}{Model}{Models}
\theoremstyle{plain}
\newtheorem{theorem}{Theorem}		
\newtheorem{corollary}{Corollary}		
\newtheorem{proposition}{Proposition}		
\theoremstyle{definition}
\newtheorem{definition}{Definition}		
\newtheorem*{example*}{Example}		
\theoremstyle{remark}
\newtheorem{remark}{Remark}
\newtheorem*{remark*}{Remark}
\newcommand{\draft}[1]{#1}		
\newcommand{\bb}{\color{Black}}
\newcommand{\bc}{\color{Black}}
\newcommand{\newmacro}[2]{\newcommand{#1}{\draft{#2}}}		
\newcommand{\newop}[2]{\DeclareMathOperator{#1}{\draft{#2}}}		
\DeclarePairedDelimiter{\pospart}{[}{]_{+}}		
\DeclarePairedDelimiterX{\setof}[1]{\{}{\}}{#1}		
\DeclarePairedDelimiterX{\setdef}[2]{\{}{\}}{#1:#2}		
\DeclarePairedDelimiterXPP{\exclude}[1]{\mathopen{}\setminus}{\{}{\}}{}{#1}
\newop{\simplex}{\Delta}		
\newcommand{\alt}[1]{#1'}		
\newcommand{\altalt}[1]{#1''}		
\newmacro{\dd}{\:d}		
\newcommand{\bv}{\color{Black}}
\newmacro{\tstart}{0}		
\renewcommand{\time}{\draft{t}}		
\newmacro{\timealt}{s}		
\newop{\Nash}{NE}		
\newop{\brep}{br}		
\newop{\reg}{Reg}		
\newop{\val}{val}		
\newmacro{\players}{\mathcal{I}}		
\newmacro{\pure}{i}		
\newmacro{\purealt}{j}		
\newmacro{\purealtalt}{k}		
\newmacro{\nPures}{m}		
\newmacro{\pures}{S}		
\newmacro{\strat}{\mathbf{x}}		
\newmacro{\stratx}{x}
\newmacro{\stratcl}{c_l}	
\newmacro{\stratcr}{c_r}	
\newmacro{\stratalt}{\alt\strat}		
\newmacro{\strataltalt}{\altalt\strat}		
\newmacro{\strats}{\mathcal{X}}		
\newmacro{\intstrats}{\strats^{\circle}}		
\newmacro{\eq}{p}		
\newmacro{\pay}{u}		
\newmacro{\payv}{v}		
\newmacro{\pot}{\Phi}		
\newmacro{\game}{\mathcal{G}}		
\newmacro{\gamefull}{\game(\pures,\payv)}		
\newmacro{\fingame}{\Gamma}		
\newmacro{\fingamefull}{\Gamma(\players,\pures,\pay)}		
\newmacro{\mixgame}{\simplex(\fingame)}
\newmacro{\plans}{Y} 
\newmacro{\attr}{A}		
\newmacro{\plan}{y}		
\newmacro{\wattr}{\lambda_{\attr}}		
\newmacro{\relattr}{\Lambda}		
\newmacro{\awf}{\lambda}		
\newmacro{\dive}{v}		
\newmacro{\distinct}{d}		
\newmacro{\consumers}{\mathcal{I}}		
\newmacro{\consum}{z}	
\newmacro{\improf}{V}	
\newmacro{\cost}{C_{\inno}}	
\newmacro{\aprof}{\alpha}
\newmacro{\bprof}{\beta}
\newmacro{\inno}{I} 
\newmacro{\imit}{R} 
\newmacro{\reta}{RT} 
\newmacro{\marketv}{g} 
\newmacro{\marketvu}{\bar g} 
\newmacro{\domain}{D}
\newmacro{\mat}{M}		
\newmacro{\hmat}{H}		
\newmacro{\state}{\strat}		
\newmacro{\statealt}{\score}		
\renewcommand{\phi}{\varphi}
\newmacro{\fbio}{\bio}
\newmacro{\fecon}{\econ}
\newmacro{\fcs}{\learn}
\newop{\RD}{RD}
\newmacro{\set}{\mathcal{S}}
\newmacro{\size}{z}
\newmacro{\score}{y}
\newmacro{\step}{\delta}
\newmacro{\switch}{\rho}
\newmacro{\gain}{\gamma}
\newmacro{\payvec}{\pi}
\newacro{LHS}{left-hand side}
\newacro{RHS}{right-hand side}
\newacro{iid}[i.i.d.]{independent and identically distributed}
\newacro{lsc}[l.s.c.]{lower semi-continuous}
\newacro{whp}[w.h.p]{with high probability}
\newacro{wp1}[w.p.$1$]{with probability $1$}
\newacro{ODE}{ordinary differential equation}
\newacro{CCE}{coarse correlated equilibrium}
\newacro{NE}{Nash equilibrium}
\newacro{ESS}{evolutionarily stable state}
\newmacro{\gaina}{a}
\newmacro{\gainb}{b}
\newmacro{\gainc}{c}
\newmacro{\gaind}{d}
\newmacro{\PPI}{pairwise proportional imitation}
\begin{document}

\title{Diversity in Schumpeterian games}

\author[F.~Falniowski]{Fryderyk Falniowski$^{\ast}$}
\author[E.~Pli\'{s}]{El\.{z}bieta Pli\'{s}$^{\ast}$}
\EMAIL{falniowf@uek.krakow.pl}
\EMAIL{plise@uek.krakow.pl}

 \address{$^{\ast}$\, Department of Mathematics, Krakow University of Economics, Rakowicka 27, 31-510 Krak\'{o}w, Poland.}

\begin{abstract}
 We examine the impact of the change in diversity introduced by a~new product on the evolution of the economic system. Modeling Schumpeterian competition as a population game with the unique, attracting, evolutionarily stable state (the Schumpeterian state), in which both innovators and imitators coexist, we examine how the Schumpeterian state evolves depending on the properties of a new product developed by innovators. This way, we demonstrate that diversity change is one of the spiritus movens of innovation influencing the process of creative destruction.

\textbf{Keywords:} diversity, Schumpeterian competition, evolutionarily stable state

\textbf{JEL Classification Numbers:} C72, C73, D21, O31, O33
\end{abstract}

\maketitle

%

\section{Introduction}
\label{intro}
Since \citet{Schumpeter2017TheoryCycle,Schumpeter2017CapitalismDemocracy}, innovation, together with creative destruction, has been considered the driving force behind the evolution of the economic system. Thus, various models of economic evolution usually include concepts involving some form of innovation -- ranging from the Nelson-Winter model \citep{nelson1977search} and its extensions, with a vast literature in evolutionary economics \citep{Malerba1992LearningChange,Silverberg1988InnovationModel,Saviotti2004EconomicSectors}, to the role of innovation studied from the perspective of mechanism design and game theory \citep{lipieta2022diversity,belleflamme2011incentives,
pepall2014industrial}, and the Schumpeterian growth models  \citep{Acemoglu2008IntroductionGrowth, Matsuyama2022DestabilizingInnovation,Aghion1992ADestruction}. 
This process, in which old technologies are replaced by new ones and the creation of new solutions involves the elimination of existing ones, can lead, as Aghion and Howitt noted \citep{Aghion2009,Aghion1992ADestruction}, to sustainable economic growth.

But when does creative destruction make way for both {\it the old} and {\it the new}? Intuitively, in such a case, innovation must be relevant enough to be chosen by firms; on the other hand, using the old strategy must still have its benefits. We argue that the reason both options can exist in the market is the relationship between the {\it diversity} of the products: one introduced by an innovation, and the other already in use.
Thus, we postulate the interconnection between the diversity of the product market and the relevance of innovation.

Recently, research on the impact of diversity on the behavior of economic entities has attracted considerable interest (see, for example \citep{arbatli2020diversity,acemoglu2023learning}, but  
studying the nature of diversity goes far beyond economic thinking, as in the seminal works of  \citet{weitzman1992diversity,weitzman1998noah} and  \citet{Nehring2002ADiversity, Nehring2004ModellingProduction}. There is a wide variety of perspectives on diversity ranking, ranging from concepts of (bio)diversity in biology (for an overview of biodiversity measures, we refer the reader to \citep{magurran2013measuring}), through information-theoretic measures, with a special role for entropy-based diversity and complexity indexes (for diversity indexes based on entropy see \citep{jost2006entropy,falniowski2020entropy, Stirling2007ASociety}), to economically oriented concepts of diversity as a metric of opportunity (such perspective was studied e.g. by \citet{barbera2004ranking}), and product variety, along with its implications. (There is a vast literature considering the concept of product variety, for different perspectives we refer the reader to \citep{Stirling2007ASociety,stiglitz2017monopolistic,anderson1999pricing}).
Since holding on to the Schumpeterian vision of economic development \citep{Schumpeter2017CapitalismDemocracy}, innovation is the only way to create economic novelty; there can be no growth in economic diversity without innovation. Of particular interest, due to its natural ability to increase product diversity, is an innovation that offers consumers a previously unknown good. It can have features that 
turn out to be essential.
We call it a \emph{relevant innovation} (we take it after \citep{lipieta2022diversity}). 

While studying the mutual relationship between economic evolution, driven by innovation,  and economic diversity, one can ask \bv 1) \bc how a relevant innovation changes economic diversity and \bv 2) \bc how the change in diversity affects economic evolution. Regarding the first question, although, as already mentioned, the relevant innovation is a necessary condition for an increase in economic diversity, it is not sufficient (for a more detailed discussion, see \citep{plis2020}). Since innovation usually entails some kind of destruction (as new products are created, old technologies are no longer used), one may be concerned about a possible decline in diversity. However, if an innovative technology adopts all the relevant attributes of a previously used technology, the process of creative destruction does not necessarily lead to a reduction in economic diversity (we refer the reader to \citep{lipieta2022diversity} for the example of an eco-innovation that does not reduce economic diversity). 
This allows us to point to the diversity of a product as another component of the Schumpeterian paradigm, which invariably influences the modern economy \citep{aghion2019innovation,aghion2022creative,aghion2023creative}.

{\bf Our contribution.} In this paper, we 
 show how changes in diversity  of products influence the evolution of the economic system. In this way, we address the second question and examine how an increase in diversity due to relevant innovation can affect the ratio of innovators, thereby changing the structure of the economy. We propose the concept of economic diversity, in which consumer needs are revealed through the structure and weight of the products' attributes, assuming that consumers can evaluate the features of any good with respect to their internal hierarchy of needs (in the spirit of \citet{MengerCarl1871PrinciplesEconomics}). This approach enables us to identify and weigh all the relevant attributes that a product or a technology possesses and to  consider its diversity (according to the definition of a function of diversity proposed by \cite{Nehring2002ADiversity}). 
 In this way, it enables us to look at (relevant) innovation through the lenses of diversity (determined by relevant attributes) and dissimilarity of new products/technologies compared to existing ones. Then
the increase in diversity caused by innovation may encourage firms to innovate.  Thus, the change in diversity impacts payoffs of agents, and it is identified as an important factor of the evolution of the system.

More precisely, we consider a model of Schumpeterian competition that takes the form of a population game among firms in the given product market. 
We define two {\it Schumpeterian strategies} (using the notion of a Schumpeterian strategy from \citep{Andersen2007BridgingTheory}), which are available to any firm: \emph{to innovate} and \emph{to refrain from innovation}. While introducing a relevant innovation changes a product or technology in a way that consumers need \bv (and await)\bc, refraining from innovation means not making any significant changes to the good produced and the technology used. In the model, the payoff for a firm that uses one of the Schumpeterian strategies depends on how consumers evaluate the product modification compared to the old technology.
 We show that innovation that meets consumer needs by increasing economic diversity (i.e., the relevant innovation) can create an evolutionarily stable structure within the population of firms.
Moreover, such an evolutionary stable state of the population, in which some firms decide to innovate while others stick to proven technological solutions, is unique in each Schumpeterian game. The unique, stable structure thus created is called the Schumpeterian state. 

The increase in diversity associated with the relevant innovation affects the rate of firms that decide to make an effort to innovate in the Schumpeterian state.
The greater the increase in diversity that innovative technology provides, compared to the diversity of the technology previously used, the more firms decide to develop new technologies. Therefore, the more innovations are needed, the more firms will make efforts to introduce them. Thus, the change in diversity directly affects the evolution of the economic system. This indicates the need to take into account economic diversity in growth models based on the concept of innovation.

{\bf Related work.} 
There is a vast plethora of literature on diversity and evolution. As within the scope of this paper there is no review of the topic, we address the articles that particularly influenced our research or are of primary importance in the field. The notion of diversity that we apply was proposed by  \cite{Nehring2002ADiversity} and used in various contexts, from modeling product complementarities \citep{Nehring2004ModellingProduction} to evaluating agents' freedom of choice \citep{Sher2018EvaluatingFreedom} and measuring the technological diversity of the economic system \citep{plis2020}. Here, we extend the concept by constructing a function of diversity in which the weights of attributes are set by the (continuum) population of agents, thereby enabling us to apply it in a population game with a continuum of firms and a continuum of consumers.
The context and initial idea of the Schumpeterian game we propose are based on the unpublished work by \cite{Andersen2007BridgingTheory}. 

Innovation, and process innovation in particular, seen as a tool to reduce the average production cost of firms, has been widely studied, especially in the context of a duopoly \citep{delbono1991incentives,belleflamme2011incentives,ferreira2014incentives,parenti2017cournot}.
In particular, we refer to the work by \cite{RePEc:eee:gamebe:v:31:y:2000:i:1:p:1-25}, who
showed that an industry with a priori identical firms that engage in some level of R\&D ends up with a heterogeneous population with an innovator-imitator configuration at equilibrium that resembles our  Schumpeterian state.
On the other hand, the model of Schumpeterian competition we propose fits into broadly understood monopolistic competition with vertical product differentiation \citep{dixitst77,chang2012monopolistic} and invokes Lancaster's legacy of a characteristics-based demand approach (see: \citep{LancasterInnovative, Lancaster1966ATheory,magnolfi}.

\section{Diversity}
\label{div}
We introduce the concept of diversity in the game-theoretic setting. 
The approach presented there, based on the attribute model proposed by \citet{Nehring2002ADiversity}, allows for the measure of diversity to be built from the bottom up -- by leveling up from the idea of objects' attributes.

\subsection{The multi-attribute model of diversity}  Throughout the paper, we identify a good with the technology of its production and \bb denote a finite and non-empty \bc set of technologies that firms operating in the market can use, \bb or the products they offer, by $\plans$. \bc
We aim to define a function of diversity similar to \bb that proposed by \citet{Nehring2002ADiversity}, which, for the continuum of consumers,  would reflect their \bc needs and tastes for different technologies (and goods).  In this paper we show how product diversity and technology dissimilarities affect Schumpeterian competition. To this aim, in this section we introduce the building blocks of the model step by step.

\emph{An attribute} 
 in the \bv (nonempty) \bc set $\plans$ is any subset $\attr\subset \plans$ of technologies (or goods) that possesses defined characteristics (such as the material used, the color of the good produced, its weight, usage, etc.). \bb The attribute $\attr$ is then identified with the set \bc \[\{\plan\in \plans\colon  \plan \textrm{ possess attribute } \attr\}.\] 
Each element in the set $\attr$ \emph{possesses} the attribute $\attr$, that is, $\plan \in \attr$, if the technology
$\plan\in \plans$ possesses the attribute $\attr$.  \bb We say that \bc a set of technologies $D\subset \plans$ \emph{realizes} the attribute $\attr$, if there is at least one technology of $D$ that possesses the attribute $\attr$, that is, if $\attr\cap D \neq \emptyset$. \bb
The number of possible attributes available to consumers is then no greater than $2^{\kappa}-1$, where $\kappa$ is the number of technologies used by the firms. \bc

\bb In the context of a population game, we are dealing with an uncountable set of consumers who assign their own values to each attribute.
Let a unit interval $\players = [0, 1]$ be the set of consumers, each consumer $\consum\in\players$ assigning  \bc a value of $\awf_{\attr}(\consum)\in [0, 1]$ to the attribute $\attr\subset \plans$.
\bb The rate of consumers who assigned $\attr$ values greater than $k\in [0,1]$ is given by \bc the (Lebesgue) measure of the set $\players_k=\{\consum\in  \players\colon \awf_{\attr}({\consum})>k\}.$ 
\bb We refer to the number \bc\begin{equation}\label{02} \wattr:=\int_\consumers \awf_{\attr}({\consum})dz, 
\end{equation} \bb i.e., the total value assigned to  $\attr$ by consumers, as \bc\emph{the weight of the attribute}~$\attr$.
\bb Notice that the weight of $\attr$ \bc is equal to the mean \bb value \bc of the function $\consumers\ni\consum\mapsto \awf_{\attr}({\consum})\in[0,1]$ \bb on $\players$.  (One can see that in the case of $\players$ being a finite set, the formula \eqref{02} takes the form of $\wattr=\sum_{\consum\in \consumers}\awf_{\attr}(\consum)$, making the weight of the attribute $\attr$ the same as in \cite{Nehring2002ADiversity}.) 
 We call $\attr\subset\plans$ \emph{the relevant attribute} if the weight of $\attr$ is positive. The set of technologies (or goods) that consumers find relevant is referred to as \emph{the family of relevant attributes} and is denoted by \bc \begin{equation}
\label{rodzina1}\relattr:=\{ \attr\subset  \plans: \wattr > 0\}.  
\end{equation} 
 The function $\awf\colon 2^\plans\ni \attr\mapsto \wattr\in [0,1]$ given by \eqref{02} is called \emph{the attribute weighting function}. It determines the social value of any attribute $\attr \subset \plans$ and, according to the following definition proposed by \citet{Nehring2002ADiversity}, allows us to measure diversity of any product.

 \bc

\begin{definition}
A function $\dive\colon 2^{\plans}\rightarrow [0,+\infty)$, such that for any non-empty set $D\subset \plans$ 
\begin{equation}\label{d}
\dive (D)=\sum_{ \attr\subset  \plans: \attr\cap D\neq\emptyset} \wattr,\end{equation}
with $\dive (\emptyset):=0$, is called {\it the function of diversity}. The value of $\dive (D)$ is referred to as \emph{the diversity of the set} $D\subset \plans$. 
\end{definition}

\bb
Having the function of diversity $\dive\colon 2^{\plans}\rightarrow [0,+\infty)$ defined by consumers in the given product market, we can compare any two technologies $\plan,\plan'\in \plans$ in terms of diversity in the following way. \bc
Denote by \begin{equation}\label{dis}
\distinct(\plan',\plan):=\dive(\{\plan',\plan\})-\dive(\{\plan'\})=\sum_{\attr:\,  \plan' \in \attr,\,\plan\notin \attr} \wattr,   
\end{equation} 
i.e.,  the total weight of attributes possessed by the technology $\plan'$, that the technology $\plan$ does not have. We call it \emph{the dissimilarity of the technology $\plan'$ from the technology $\plan$} \citep{Nehring2002ADiversity}.  On the other hand, the dissimilarity of the technology $\plan$ from $\plan'$ is given by \begin{equation}\label{dis'}
\distinct(\plan,\plan')=\dive(\{\plan',\plan\})-\dive(\{\plan\})=\sum_{\attr:\,  \plan \in \attr,\,\plan'\notin \attr} \wattr,  
\end{equation}
 and, since different technologies may possess different attributes and different values of diversity,  usually, for $\plan\neq \plan'$, there is $\distinct(\plan,\plan')\neq \distinct(\plan',\plan)$. 

 The dissimilarities between two technologies $\plan$ and $\plan'$ given by \eqref{dis} and \eqref{dis'} will be decisive factors of the firm's payoffs when modeling Schumpeterian competition. We will show that the firm's decision whether to introduce innovation $\plan'$ or to stay with the old technology $\plan$ depends on the difference in their diversity. However, the increase in diversity connected with innovation can be measured by the products' dissimilarities, since \begin{equation}
    \label{przyrost} \dive(\{\plan'\})-\dive(\{\plan\})=\distinct(\plan',\plan)-\distinct(\plan,\plan').
\end{equation} 
The dissimilarities between "the old" and "the new" will influence the firms' behavior in Schumpeterian competition (see: Proposition \ref{p:rate} and Corollaries \ref{prop:s1} and \ref{cor:imit}). To show how product diversity and technology dissimilarities affect agents' payoffs in the  Schumpeterian (population) game, in the following section we describe the crucial concepts from evolutionary game theory.

\section{Population games}
We will study a population game  with a continuum of agents, who consider choosing a new production plan (an innovation) or imitating a known technology. Thus, in this section we introduce basic concepts along with the canonical textbook of \citet{San10}.   

Throughout the paper, we consider games with a continuum of nonatomic players, modeled by the unit interval $\players = [0, 1]$, with each player choosing (in a measurable way) an \emph{action} from the set of two available strategies (actions) $ \pures \equiv \setof{s_1,s_2}$.
Denoting by $\stratx_1 \in [0,1]$ the mass of agents playing the first strategy (thus $\stratx_2=1-\stratx_1$ are those who choose the second strategy), the overall distribution of actions
is specified by \emph{state of the population} $\strat = (\stratx_1,\stratx_2)$, being a point in the unit simplex $\strats$.
We are interested in the case of two individuals, randomly selected from the population, and matched to play a symmetric two-player game with a payoff matrix $\mat = (\mat_{\pure\purealt})_{\pure,\purealt\in\pures}$.
In this case, the payoff for agents playing $\pure\in\pures$ in state $\strat$ is given by $\pay_{\pure}(\strat) = \sum_{\purealt\in\pures} \mat_{\pure\purealt} \stratx_{\purealt}$.
Following standard conventions in the field (see e.g. \citep{San10,HLMS22}), we refer to this scenario as \emph{symmetric random matching}.
We denote by $\payv(\strat) = (\pay_{1}(\strat),\pay_{2}(\strat))$ the associated \emph{payoff vector} in state $\strat$,
and
we refer to the tuple $\game \equiv \gamefull$ as \emph{population game}.

 A state $\strat\in\strats$ is \emph{a Nash equilibrium} of the game $\game$ if every strategy used in $\strat$ earns a maximum payoff (equivalently, each agent in the population chooses an optimal strategy with respect to the choices of others).
 
 The state $\strat\in\strats$ is \emph{evolutionarily stable} (is \emph{ESS}) if (i) $\strat^T\mat\strat\geq \stratalt^T\mat\strat$ for any $\stratalt\in\strats$ and (ii) if  $\strat^T\mat\strat= \stratalt^T\mat\strat$, then $\strat^T\mat\stratalt> \stratalt^T\mat\stratalt$.

Finally, a population state $\bar\strat=(\bar x_1,\bar x_2)\in \strats$ is \emph{globally attracting} if for any population state  $\strat=(x_1(0),x_2(0))\in \strats$ such that $(x_1(t),x_2(t))\in \strats$ for any $t\in[0,+\infty)$  there is  
 \[\lim_{t\to+\infty}\limits\rho\left((x_1(t),x_2(t)),(\bar x_1,\bar x_2)\right)=0,\] where $\rho$ is the Euclidean metric in $\mathbb {R}^2$.
 
Thus, if the system has a unique globally attracting evolutionarily stable state, then it can be seen as a limiting state of the whole system.

{\bf A game.} In this setting, we consider a random matching scenario induced by a $2\times 2$ symmetric game with a payoff matrix of the form
\bv
\begin{equation}
\label{eq:game-gen}
\begin{array}{l|cc}
	&s_1	&s_2\\
\hline
s_1	&\gaina		&\gainb\\
s_2	&\gainc	& \gaind
\end{array}
\end{equation}
\bc
where $\gaina,\gainb,\gainc,\gaind\in\mathbb{R}$ and $\gaina\neq \gainc$, $\gaind\neq \gainb$. 
Then
\begin{equation}\label{payoffs}
 \pay_1(\strat)=(\gaina-\gainb)\stratx_1+\gainb;\qquad \pay_2(\strat)=(\gainc-\gaind)\stratx_1+\gaind.
\end{equation}

In the rest of the paper, we focus mainly on anti-coordination games, i.e., those in which $\gaina <\gainc$ and $\gaind<\gainb$. Then the unique Nash equilibrium of the game is $\mathbf{p}=(p,1-p)\in \strats$, where the \bv ratio \bc of agents playing the first strategy is given by 
\bv
 \begin{equation} \label{eq:eq} 
 \eq=\frac{\gainb-\gaind}{\gainc-\gaina+\gainb-\gaind}.
 \end{equation}
\bc 

\subsection{Game dynamics driven by revision protocols} 
\label{sec:revprot}
The microeconomic foundations of evolutionary game theory are rooted in mass action interpretation, and, more precisely, in the theory of revision protocols.

Referring to the textbook of \cite{San10} for details, suppose that each agent occasionally receives an opportunity to switch actions \textendash\ say, based on the rings of a Poisson alarm clock \textendash\ and, at such moments, the agents reconsider their choices of action by comparing their payoffs with the payoff of the action of a randomly chosen individual in the population.
A \emph{revision protocol} of this kind is typically defined by specifying the \emph{conditional switch rate} $\switch_{\pure\purealt}(\strat)$ at which a revising $\pure$-strategist switches to strategy $\purealt$ when the population is at state $\strat\in\strats$. (In particular, $\switch_{\purealt\pure}(\strat(\time))>0$ if, and only if, $\pay_{\purealt}(\strat(\time))<\pay_{\pure}(\strat(\time))$ and $\switch_{\pure\purealt}(\strat(\time))>0$ if, and only if, $\pay_{\pure}(\strat(\time))>\pay_{\purealt}(\strat(\time))$).
In this case, the population rate of agents switching from strategy $\pure$ to strategy $\purealt$ leads to the inflow-outflow equation of the form
\begin{equation}
\label{eq:in-out}
\dot{\stratx_{\pure}}(\time)
	= 	\sum_{\purealt\neq\pure} \stratx_{\purealt}(\time) \switch_{\purealt\pure}(\strat(\time))
		- \stratx_{\pure}(\time) \sum_{\purealt\neq\pure} \switch_{\pure\purealt}(\strat(\time))	
\end{equation}

 The archetypal example of a revision protocol is the pairwise proportional imitation (PPI) of \citet{Hel92}, which leads to the replicator equation \citep{taylorjonker78}. It is described by the switch rate functions
\begin{equation}
\label{eq:PPI}
\switch_{\pure\purealt}(\strat)
	= \stratx_{\purealt} \pospart{\pay_{\purealt}(\strat) - \pay_{\pure}(\strat)},
\end{equation}
where $[r]_+ = \max\{r,0\}$. 
According to this protocol, a revising agent first observes the action of a randomly selected opponent, so a $\purealt$-strategist is observed with probability $\stratx_{\purealt}$ when the population is in state $\strat\in\strats$.
Then, if the payoff of the incumbent strategy $\pure\in\pures$ is lower than that of the benchmark strategy $\purealt$, the agent imitates the selected agent strategy with probability proportional to the payoff difference $\pay_{\purealt}(\strat) - \pay_{\pure}(\strat)$.
Otherwise, the revising agent skips the revision opportunity and sticks to their current action.

Other well-known examples of revision protocols are direct protocols. For example, with switch rates given by   
\begin{equation} \label{eq:Smithprot}\switch_{\pure\purealt}(\strat)= \pospart{\pay_{\purealt}(\strat) - \pay_{\pure}(\strat)}. 
\end{equation}
In this case, agents compare their payoffs with those resulting from other strategies (used or not) and decide to change them only when the payoff from the benchmark strategy is more profitable, which leads to the Smith dynamics \citep{Smi84}. 
Another famous game dynamics is that of Brown, von Neumann and Nash \citep{brown1950solutions,hofbauer2009brown}, determined by the revision protocol of the form:
\begin{equation} \label{eq:BNNprot}\switch_{\pure\purealt}(\strat)= \pospart{\pay_{\purealt}(\strat) - \pay(\strat)}, 
\end{equation}
where $\pay(\strat)$ is the average payoff in $\strat$.
For more information on the revision protocols \bv and dynamics they introduce \bc we refer the reader to \cite{San10} and a more recent article by \citet{MV22}. 

\subsection{Evolutionarily stable states in population game with 2 strategies} 
An evolutionarily stable state may not exist in an arbitrary population game. Nevertheless, in the case of population games with two strategies, we have a simple characterization of ESS's. 

\begin{theorem} \label{thm:ESS2strategies}
    For any population game with a payoff matrix given by \eqref{eq:game-gen}, there exists an evolutionarily stable state (ESS). If $a<c$ and $b>d$, then the game has a unique (\bb interior\bc) ESS 
    of the form 
    \begin{equation}
        \label{poliESS} \strat^*=(\stratx_1^*,\stratx_2^*)=(p,1-p)= \left(\frac{d-b}{a-c+d-b},\frac{a-c}{a-c+d-b}\right).
    \end{equation}
    Otherwise, at least one of the monomorphic states: $(1,0)$ or $(0,1)$ is evolutionarily stable. \medskip
    
    Moreover, for game dynamics given by equation \eqref{eq:in-out}:
\begin{enumerate}
    \item the population state $(1,0)$ is globally attracting ESS when $a>c$ and $b>d$;
    \item the population state $(0,1)$ is globally attracting ESS when $a<c$ and $b<d$;
    \item the \bv interior \bc state $\strat^*$ is globally attracting ESS, if only $a<c$ and $b>d$;
    \item if $a>c$ and $b<d$, then to which ESS the population is attracted depends on the initial state of the population: if $\stratx (0)<p$, then the state of the population is attracted to $(0,1)$, whereas if $\stratx (0)>p$, then it is attracted to $(1,0)$. 
\end{enumerate}    
\end{theorem}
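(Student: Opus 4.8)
The plan is to exploit that a two-strategy population game is essentially one-dimensional: a state is determined by the single coordinate $\stratx_1\in[0,1]$, and by \eqref{payoffs} the payoff gap
\[
\Delta(\strat)\;:=\;\pay_1(\strat)-\pay_2(\strat)\;=\;(a-b-c+d)\,\stratx_1+(b-d)
\]
is an affine function of $\stratx_1$ whose values at the monomorphic states are $b-d$ (at $(0,1)$) and $a-c$ (at $(1,0)$). Both halves of the theorem — the ESS classification together with existence, and the four dynamical claims — should come out of tracking the sign of $\Delta$ on $[0,1]$. I would therefore split the argument into a static block (ESS and existence) and a dynamical block (phase line of \eqref{eq:in-out}).

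For the static block I would use the two standard facts that a strict symmetric Nash equilibrium — a state that is its own unique best reply — is an ESS, and that every ESS is a symmetric Nash equilibrium. Writing a rival state as $\stratalt=(y,1-y)$, one gets $\stratalt^{T}\mat\,(1,0)^{T}=c+y(a-c)$, so $(1,0)$ is its own unique best reply exactly when $a>c$, and symmetrically $(0,1)$ exactly when $d>b$. For the interior state $\strat^{*}=(p,1-p)$ of \eqref{poliESS}, the value $p$ is by construction the zero of $\Delta$, so $\pay_1(\strat^{*})=\pay_2(\strat^{*})$ and hence condition (i) in the definition of ESS holds with equality for \emph{every} $\stratalt$; thus $\strat^{*}$ is an ESS precisely when condition (ii) holds, i.e. $(\strat^{*}-\stratalt)^{T}\mat\,\stratalt>0$ for all $\stratalt\neq\strat^{*}$. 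Using $\strat^{*}-\stratalt=(p-y)(1,-1)$ and the identity $\Delta(\stratalt)=(a-b-c+d)(y-p)$, this quantity simplifies to $-(a-b-c+d)(y-p)^{2}$, which is strictly positive for all $y\neq p$ exactly when $a-b-c+d<0$. Assembling: if $a<c$ and $b>d$ then $a-b-c+d=(a-c)-(b-d)<0$, so $\strat^{*}$ is an ESS; moreover the excerpt already records that in this anti-coordination regime $\mathbf p$ is the \emph{unique} Nash equilibrium, and $\strat^{*}$ coincides with it, so $\strat^{*}$ is the unique ESS and lies in $(0,1)$ since its numerator and denominator are both negative. In every other case the standing hypotheses $a\neq c$, $b\neq d$ force $a>c$ or $b<d$, whence $(1,0)$ or $(0,1)$ is a strict Nash equilibrium, hence an ESS. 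In particular an ESS exists in all cases, which is the first assertion.

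For the dynamical block I would set $\pure=1$ in \eqref{eq:in-out}, collapsing it to the scalar ODE $\dot\stratx_1=F(\stratx_1)$ with $F(\stratx_1)=\stratx_2\,\switch_{21}(\strat)-\stratx_1\,\switch_{12}(\strat)$. By the sign conventions on the conditional switch rates recorded in Section~\ref{sec:revprot} — a revising agent moves only toward a strictly better strategy — at any state at most one of $\switch_{21}(\strat),\switch_{12}(\strat)$ is positive, with $\switch_{21}(\strat)>0$ iff $\pay_1(\strat)>\pay_2(\strat)$ and $\switch_{12}(\strat)>0$ iff $\pay_2(\strat)>\pay_1(\strat)$. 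Hence the sign of $F(\stratx_1)$ equals that of $\Delta(\stratx_1)$ on the open segment $\stratx_1\in(0,1)$, while $F(0)=\switch_{21}((0,1))\ge 0$ and $F(1)=-\switch_{12}((1,0))\le 0$, so $[0,1]$ is forward invariant and every trajectory starting in $\strats$ stays there (making the hypothesis in the definition of ``globally attracting'' automatic). Since $\Delta$ is affine it has no zero in $[0,1]$ when its two endpoint values share a sign, and a single zero, at $\stratx_1=p\in(0,1)$, when they have opposite signs; reading off the four regimes: (1) $a>c,\ b>d$ gives $F>0$ on $[0,1)$, so $\stratx_1(t)\uparrow 1$; (2) $a<c,\ b<d$ gives $F<0$ on $(0,1]$, so $\stratx_1(t)\downarrow 0$; (3) $a<c,\ b>d$ gives $F>0$ on $[0,p)$ and $F<0$ on $(p,1]$, so $\stratx_1(t)\to p$ from any start; (4) $a>c,\ b<d$ gives $F<0$ on $[0,p)$ and $F>0$ on $(p,1]$, so $\stratx_1(t)\to 0$ when $\stratx_1(0)<p$, $\stratx_1(t)\to 1$ when $\stratx_1(0)>p$, and $\stratx_1\equiv p$ when $\stratx_1(0)=p$. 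Together with the ESS facts from the previous block, this is exactly statements (1)--(4).

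I expect the only real work to be twofold. First, carrying out the small computations cleanly: the identity $(\strat^{*}-\stratalt)^{T}\mat\,\stratalt=-(a-b-c+d)(y-p)^{2}$, and the endpoint-value and sign bookkeeping for $\Delta$ across the four parameter regimes — routine but easy to slip on a sign. Second, promoting ``the vector field $F$ has the correct sign'' to ``the trajectory converges'': for a scalar autonomous ODE on the forward-invariant compact interval $[0,1]$ this is the standard monotone-convergence argument (a bounded monotone $\stratx_1(t)$ converges, and its limit must be a zero of the continuous $F$, hence the adjacent rest point), but it requires handling the endpoints carefully and, in case (4), recognising $\stratx_1=p$ as a separatrix rather than an attractor. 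Everything else follows directly from the affine structure of $\Delta$ and the sign conventions for revision protocols.
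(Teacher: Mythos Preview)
Your argument is correct and, on the dynamical side, mirrors the paper's proof almost exactly: both reduce \eqref{eq:in-out} to a scalar equation on $[0,1]$, observe that the sign of $\dot\stratx_1$ agrees with that of $\Delta(\stratx_1)=\pay_1(\strat)-\pay_2(\strat)$ by the sign conventions on switch rates, and read off the four phase portraits from the endpoint values $b-d$ and $a-c$ of the affine function $\Delta$. If anything, you are more explicit about forward invariance of $[0,1]$ and the monotone-convergence step, which the paper leaves implicit.

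The one genuine methodological difference is in the ESS verification. The paper first locates the attracting rest points of \eqref{eq:in-out} and then invokes an external result (Theorem~9.8 of Webb, \emph{Game Theory: Decisions, Interaction and Evolution}) that asymptotically stable rest points of the replicator dynamics are ESS, noting that the sign of $\dot\stratx_1$ under \eqref{eq:in-out} coincides with that under replicator dynamics. You instead verify the ESS conditions directly from the definition: strict Nash for the monomorphic states via $\stratalt^{T}\mat(1,0)^{T}=c+y(a-c)$, and the clean identity $(\strat^{*}-\stratalt)^{T}\mat\,\stratalt=-(a-b-c+d)(y-p)^{2}$ for the interior state. Your route is more self-contained and makes the anti-coordination condition $a-b-c+d<0$ transparently equivalent to evolutionary stability of $\strat^{*}$; the paper's route is shorter on the page but relies on an imported theorem.
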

Theorem \ref{thm:ESS2strategies} is a compilation of various well-known facts about population games, 
 taking into account the dynamics of economic agents' behavior (see, e.g. \citep{herold2020evolution,bielawski2025emergence}). For completeness \bv of the exposition\bc, the proof is \bv provided \bc in the Appendix. 
 
 The following consequence of Theorem \ref{thm:ESS2strategies} turns out to be particularly useful in our considerations.

\begin{corollary} \label{cor:polymorphic}
    For any game with payoffs given by \eqref{eq:game-gen}, where $a<c$ and $b>d$, the game dynamics \eqref{eq:in-out} converge to the state $\strat^*$ given by \eqref{poliESS}, for any \bv interior \bc initialization. 
\end{corollary}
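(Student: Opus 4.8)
The plan is to read the corollary as the specialization of case~(3) of \cref{thm:ESS2strategies} to polymorphic initial data. That theorem already asserts that, when $a<c$ and $b>d$, the polymorphic state $\strat^*$ of \eqref{poliESS} is the globally attracting ESS for the dynamics \eqref{eq:in-out}; and \emph{globally attracting} was defined to mean that every solution of \eqref{eq:in-out} which remains in $\strats$ for all $t\in[0,+\infty)$ converges to $\strat^*$. So the only thing that has to be added is the observation that a polymorphic initialization $\strat(0)=(\stratx_1(0),\stratx_2(0))$ with $\stratx_1(0)\in(0,1)$ produces a trajectory that stays in $\strats$ for all forward time; once this is in place, case~(3) of \cref{thm:ESS2strategies} yields the convergence immediately.

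For the invariance step I would pass to the scalar form of \eqref{eq:in-out}. With two strategies the inflow--outflow equation reduces to $\dot{\stratx}_1=\stratx_2\,\switch_{21}(\strat)-\stratx_1\,\switch_{12}(\strat)$ together with $\stratx_2=1-\stratx_1$ (and $\dot{\stratx}_1+\dot{\stratx}_2=0$, so $\stratx_1+\stratx_2\equiv 1$ is preserved automatically). Since conditional switch rates and population masses are nonnegative, on the face $\{\stratx_1=0\}$ one has $\dot{\stratx}_1=\switch_{21}(0,1)\ge 0$ and on $\{\stratx_1=1\}$ one has $\dot{\stratx}_1=-\switch_{12}(1,0)\le 0$; hence the vector field never points out of the segment $[0,1]$ (identified with $\strats$ via $\stratx_1$), which is therefore forward invariant, and in particular a polymorphic trajectory remains in $\strats$. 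In the anti-coordination regime one can say more: evaluating \eqref{payoffs} at the vertices gives $\pay_1=b>d=\pay_2$ at $(0,1)$ and $\pay_2=c>a=\pay_1$ at $(1,0)$, so by the sign convention for $\switch_{\pure\purealt}$ both of the above inequalities are strict and the relative interior $\{\stratx_1\in(0,1)\}$ is itself forward invariant; this is not needed for the statement, but it makes transparent that the trajectory stays genuinely polymorphic.

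Combining the two pieces finishes the proof: a polymorphic initialization remains in $\strats$ for all $t\ge 0$, so by case~(3) of \cref{thm:ESS2strategies} it converges to $\strat^*$. If one prefers an argument that does not route through the appendix proof of \cref{thm:ESS2strategies}, the same scalar reduction closes the corollary directly: in the anti-coordination case $\pay_1(\strat)-\pay_2(\strat)=(a-b-c+d)\stratx_1+(b-d)$ is strictly decreasing in $\stratx_1$ and vanishes exactly at $\stratx_1=p$, so $\dot{\stratx}_1>0$ on $(0,p)$, $\dot{\stratx}_1<0$ on $(p,1)$, and $\stratx_1=p$ is the unique rest point in $[0,1]$; a polymorphic solution $\stratx_1(\cdot)$ is thus monotone, cannot cross the rest point $p$, and hence converges to $p$, \ie $\strat(t)\to\strat^*$.

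I do not expect a genuine obstacle here: the corollary is little more than a restatement of case~(3) of \cref{thm:ESS2strategies}. The only point needing a line of care is the forward invariance of $\strats$ from polymorphic data, since it is precisely this that makes the conditional hypothesis in the definition of \emph{globally attracting} applicable; everything else is bookkeeping with the payoff expressions \eqref{payoffs} and the sign conventions for the switch rates $\switch_{\pure\purealt}$.
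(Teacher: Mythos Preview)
Your proposal is correct and matches the paper's treatment: the corollary is stated as an immediate consequence of case~(3) of \cref{thm:ESS2strategies}, and the paper's own proof of that case in the appendix is precisely the scalar sign argument you give as your alternative route (showing $\dot{\stratx}_1>0$ on $(0,p)$ and $\dot{\stratx}_1<0$ on $(p,1)$ via \eqref{eq:dot-pay1}). Your explicit forward-invariance check is more detail than the paper provides, but it is consistent with and in the spirit of the paper's argument.
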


This corollary allows us to focus on the unique and evolutionarily stable \bv interior \bc state of the anti-coordination game. Therefore, we can examine the impact of changes in innovation-induced diversity on the economic system by observing how the \bv interior \bc ESS is changing.  In the next section these results will let us introduce concepts of a Schumpeterian game and a Schumpeterian state -- a population state where imitators and innovators coexist.

\section{A Schumpeterian game}
Let us consider a population game in which each firm can choose to use either an innovative technology or a proven one. We will analyze how the dissimilarities between these two technologies (which determine a firm's payoff) impact firms' decisions whether to innovate or not.

\subsection{A relevant innovation}
Consider a product market, with firms modeled by the unit interval $\players$, a set of possible technologies (products)  $\plans$, and a function of diversity $\dive\colon 2^{\plans}\rightarrow [0,+\infty)$. Let $\plan \in\plans$ be some known \emph{old} technology. 

We call $\plan'\in \plans$ a \emph{relevant innovation} of $\plan$, if the diversity of $\plan'$ is greater than the diversity of $y$, that is, \begin{equation}\label{eq:r}
\dive(\{\plan'\})>\dive(\{\plan\}).   
\end{equation} 
Thus, according to equation \eqref{przyrost}, $\plan'\in \plans$ is a relevant innovation of $\plan\in \plans$, if  \begin{equation}\label{różnice}
 \distinct(\plan',\plan)>\distinct(\plan,\plan')\geq 0.
 \end{equation}
Inequality $\distinct(\plan',\plan)>0$ indicates that there is a relevant attribute that the old technology does not have but that the innovation provides.
On the other hand, inequality $\distinct(\plan,\plan')>0$ expresses that if the innovation is the only technology used, some consumers are disappointed by the disappearance of the attributes of the old technology. If innovation possesses all the relevant attributes of the old technology, then $\distinct(\plan,\plan')= 0$.

\subsection{The value of the market with a relevant innovation} We assume that the set of available technologies $\plans$ consists solely of two technologies: $\plans=\{\plan,\plan'\}$, where $\plan$ represents the old technology and $\plan'$ is its relevant innovation. (We make this simplification for clarity of the exposition and to focus on the impact of innovation (dissimilarity in technologies) on the population state. The ideas presented in this section can easily be extended to larger sets of available technologies. Note also that $\plan$ ($\plan'$) can refer to more than one technology that possesses the same attributes.)

Denote by $\improf$ the \bv (aggregated) \bc market value for the old technology $\plan$. \bv We assume $\improf$ is known and constant. \bc Let $\improf'$ be the \bv (aggregated) value of the market with the (relevant) innovation $\plan'$, \bc where $\improf'=\improf+\aprof$, and $\aprof>0$ determines the increase in the market value resulting from the occurrence of innovation. We assume that the value of $\aprof$ depends only on the dissimilarity of the product $\plan'$ from the product $\plan$, that is $\distinct(\plan',\plan)$  (see \eqref{dis}), that is, there is an increasing function $\marketv\colon [0,+\infty)\rightarrow [0,+\infty)$ with $\marketv(0)=0$, such that \begin{equation} \label{ap}\aprof:=\marketv(\distinct(\plan',\plan)).\end{equation}
To indicate the value of the market with innovation, but without old technology, we use a parameter $\bprof$, which depends on the value of $\distinct(\plan,\plan')$ in such a way that \begin{equation}    \label{bp} \bprof:=\marketvu(\distinct(\plan,\plan')), \end{equation} where $\marketvu\colon [0,+\infty)\rightarrow [0,+\infty)$ is an increasing function such that $\marketvu(0)=0$.
The value of the market with $\plan'$ being the only technology used is equal to $\improf+\aprof-\bprof$.  
Since, by \eqref{eq:r}, diversity increases due to innovation, we assume that the value of the market with innovation is also greater than without it, that is, \begin{equation} \label{ab}    \aprof>\bprof\geq 0.\end{equation} 
\bv In particular, if $\marketvu=\marketv$, then it is true for any relevant innovation.

Although both $\alpha$ and $\beta$ are described by dissimilarity, they differ (usually $d(y',y)\neq d(y,y')$). In particular, the change in $\alpha$ is determined by the change of $d(y',y)$, thus determining how much {\it new/additional} value the innovation has. The change in $\beta$ is caused by the value lost due to innovation. 

\bc

\subsection{The Schumpeterian strategies}
We identify a firm strategy with the choice of a specific technology from the set of available technologies $\plans=\{\plan,\plan'\}$, where $\plan$ represents the old technology, and $\plan'$ is its relevant innovation, that is, condition \eqref{różnice} is met. Then, each firm $\pure \in \players$ can choose one of the following two actions:
\begin{enumerate}
    \item [\textbf{$\imit$}] a strategy of refraining from innovation and using old technology $\plan$; this strategy does not entail any additional costs and may generate a profit not exceeding the market value $\improf$ of the old technology;

    \item [\textbf{$\inno$}] an innovation strategy that involves investing a fixed cost of $\cost>0$ in exploiting a new business opportunity (the value of $\cost$ measures the cost of research and development of innovation $\plan'$), which allows the firm to innovate and achieve a profit at most equal to $
     \improf+\aprof-\bprof-\cost$,
  with $\aprof$ and $\bprof$ given by \eqref{ap} and \eqref{bp} respectively.
    \end{enumerate}  
Since actions $\inno$ and $\imit$ refer to the specific behavior of the firm described by \cite{Schumpeter2017TheoryCycle} we refer to the set $S=\{ \inno, \imit\}$ as \emph{the set of Schumpeterian strategies}.

When looking into a firm's payoffs in the two-agent symmetric game with Schumpeterian strategies, we consider the following three matchings.
\begin{enumerate}
    \item If both firms refrain from innovating and use the old technology, the profit $\improf$ is divided equally between both and each firm receives a payoff equal to $$\pay_1(\imit,\imit )=\pay_2(\imit,\imit )=\frac{\improf}{2}.$$
    
    \item If both firms decide to innovate, which means that each invests $\cost>0$ in developing the new technology, the profit $\improf+\aprof-\bprof$ is equally divided between both firms; thus, each firm receives a payoff equal to $$\pay_1(\inno,\inno )=\pay_2(\inno,\inno )=\frac{\improf+\aprof-\bprof}{2}-\cost.$$ 

    \item If one firm implements an innovation, while the other uses the old technology, the innovator, who pays the cost of innovation $\cost>0$, gains $\improf+\aprof-\bprof-\cost$; while the payoff for the firm that does not innovate is equal to $\bprof$:
$$\pay_1(\inno,\imit)=\pay_2(\imit,\inno)=\improf+\aprof-\bprof-\cost,\;\;\;\; \;\;\;\;\pay_1(\imit,\inno)=\pay_2(\inno,\imit)=\bprof.$$  
 \end{enumerate}
 Combining the above, the payoff matrix of the $2\times 2$ symmetric game with strategies of the set $S=\{ \inno, \imit\}$ takes the form
 \bv
  \begin{equation}
\label{eq:sgame-gen}
\begin{array}{l|cc}  
	&\inno	&\imit\\
\hline 
\inno	&\frac{\improf+\aprof-\bprof}{2}-\cost &\improf+\aprof-\bprof-\cost\\
\imit	&\bprof	& \frac{\improf}{2}
\end{array},\vspace{20pt}
\end{equation}
\bc
with $\improf>0$, $\cost>0$, and $\aprof>\bprof\geq 0$.

The $-\bprof$ component of the innovator's profit in scenario (2) indicates that a certain relevant attribute of the old technology is no longer available in the market when both firms decide to innovate. 
However, scenario (3) offers a different interpretation of the parameter $\bprof$, as it determines the non-innovating firm's share of the market value with innovation. 
\begin{remark}
     Despite many possible ways of dividing profits, we assume that firms using the same strategy share profits equally, which fits the symmetric random matching model. It typically imposes that every player is statistically identical in terms of the probability of being matched with another player, and thus, a uniform matching probability across the population.
 \end{remark}

\bv
\begin{remark}
 The decision whether to innovate is usually intertwined with uncertainty about how the innovation will be received. In the presented model, this uncertainty may have an impact on  $\aprof$ and $\bprof$. We assume that they can be precisely estimated by players. Although this may be seen as a limitation of the model, taking this uncertainty into account (even through an imprecise approximation of the Nash equilibrium) would require a more complex solution like $\varepsilon$-Nash or quantal response equilibria \cite{mckelvey1995quantal}, as well as different dynamics (see, e.g. \cite{bielawski2024memory,CGM,ho2007self,sato2003coupled,hofbauer2005learning,kaniovski1995learning}). Moreover, as in our paper we will focus on global ESS, assuming the perturbations following from uncertainty are small, reasoning on ESS remains appropriate.
 \end{remark}
\bc 

Comment should be made \bv about the consequences of \bc observing payoffs in the game. First, if $\bprof>\frac{\improf}{2}$, and consequently $\pay_1(\imit,\inno)>\pay_1(\imit,\imit)$, then a firm that does not innovate gains a higher payoff when competing with an innovator than with another firm that does not innovate. 
On the other hand, if all relevant attributes of the old technology are captured by innovation, and thus $\bprof=0$, then the non-innovating firm has zero profit when meeting an innovator. However, even in this case, firms can still avoid choosing the innovating strategy if the cost of innovation is high enough, that is, if $\cost>\improf+\aprof$, which results in $\pay_1(\inno,\imit)<\pay_1(\imit,\imit)$. 
Looking at the other extreme, when $\bprof=\improf$, if both firms implement innovation, that is, if $\pay_1(\inno,\inno)>\pay_1(\imit,\inno)$, then the increase in the market value resulting from innovation must be sufficiently large compared to the cost of innovation ($\aprof>2\cost$).

\subsection{A Schumpeterian game and the Schumpeterian state} A random matching scenario induced by the two-agent symmetric game with payoff matrix of the form \eqref{eq:sgame-gen} generates a population game in which, at each time, the firms in a given market choose whether to innovate or not.

The population of firms that choose Schumpeterian strategies from the set $S=\{ \inno, \imit\}$ forms a population state $\strat=(\stratx_I,\stratx_R)\in\strats$, in which $\stratx_I \in [0,1]$ denotes the rate of \emph{innovators}, i.e., firms choosing strategy $I$, and $\stratx_R=1-\stratx_I$ is the rate of \emph{imitators}, which refrain from innovation using strategy $R$. The payoff vector $\payv(\strat) = (\pay_{I}(\strat),\pay_{R}(\strat))$ consists of the payoffs for the innovator \begin{equation}\label{payoff1}
 \pay_I(\strat)=\improf+\aprof-\bprof-\cost-\frac{\improf+\aprof-\bprof}{2}\stratx_I,\end{equation}  and the imitator  \begin{equation}\label{payoff2} 
 \pay_R(\strat)=\left(\bprof-\frac{\improf}{2}\right)\stratx_I +\frac{\improf}{2}.
\end{equation}

Obviously, the gain from each strategy depends on the rate of innovators (and consequently imitators). In particular, by \eqref{payoff1}, being an innovator is more beneficial the lower the fraction of innovators on the market. On the other hand, according to \eqref{payoff2}, whether the imitator's payoff increases or decreases as the rate of innovators rises depends on the relationship between the imitator's share of the market with innovation $(\bprof)$ and without innovation $(\frac{\improf}{2})$. Specifically, the larger the fraction of innovators in the market, the payoff of the imitator increases if $\bprof>\frac{\improf}{2}$, decreases if $\bprof<\frac{\improf}{2}$, and is constant, equal to $\frac{\improf}{2}$ if $\bprof=\frac{\improf}{2}$.

\bb 
The aggregated market value without innovation, denoted by $\improf$ 
is 
the total profit gained by the firms in the given market. 
Thus, the value of $\improf$ is equal to the integral over the set of producers $\consumers$ of a measurable function that assigns profit to each firm (compare to \eqref{02}). 
 This specification, interlocked with the way agents interact through symmetric random matching in a population game, 
could be interpreted as firms competing in local, bilateral "representative"  interactions.
In this interpretation, the payoffs measure relative success in a given interaction. An important
implication is that the payoff from such an interaction does not depend on the overall
market share of innovators. The market share of innovators determines the expected (average) payoff in the population. Then the success (or lack thereof) of the player's chosen strategy influences their subsequent choices. This is reflected by the revision protocols. Therefore, these choices drive the dynamics of the game.\bc

 In the Schumpeterian approach to economic development \citep{Schumpeter2017TheoryCycle}, the market accommodates both innovators introducing new technological solutions and firms that refrain from engaging in innovation. Then, when considering Schumpeterian competition, we are interested in a population game that leads to a stable market structure, consisting of both innovators and non-innovators. 
This means that we aim for a \bv mixed \bc and evolutionarily stable population of firms, with a positive rate of firms that innovate and those that refrain from innovation. To strengthen the message, this Schumpeterian market structure will be the result of decisions made by firms governed by a natural rule --- revision protocol (as described in Section \ref{sec:revprot}). Therefore, we focus on those population games in which dynamics cause the population of firms to converge (in the long run) to a unique, globally attracting, evolutionarily stable state, regardless of the initial state of the population. Specifically, we can focus on population games that have the unique \bv interior \bc Nash equilibrium of the form $\strat^*=(\stratx_I^*,\stratx_R^*)\in \strats$ given by \eqref{poliESS}.   
By Corollary \ref{cor:polymorphic}, this state is evolutionarily stable only if 
\bv
\begin{equation}
\label{sch}\frac{\improf}{2}+\frac{\aprof-\bprof}{2}<\bprof+\cost \text { and } \frac{\improf}{2} +\aprof>\bprof+\cost.\end{equation}
\bc
\begin{definition}
We call a population game $\gamefull$ defined on the set of Schumpeterian strategies $S=\{ \inno, \imit\}$ and the payoff matrix given by \eqref{eq:game-gen} \emph{a Schumpeterian game} if condition \eqref{sch} is met. 
\end{definition} 

A Schumpeterian game is a population game in which firms decide whether to engage in innovative activities, with firms' payoffs including the impact of a market quality change related to the increase in diversity coming from innovation. The condition \eqref{sch} imposed on the payoffs and, consequently, on the parameters of the game ensures that there is a unique state that the firm population is approaching in the long term. Specifically, by Theorem \ref{thm:ESS2strategies} and Corollary \ref{cor:polymorphic}, we have the following observation.

\begin{proposition}\label{prop:sstate}
In a Schumpeterian game $\gamefull$, there is a unique \bv interior \bc globally attracting ESS of the form $\strat^*=(\stratx_I^*,\stratx_R^*)\in \strats$, hereinafter referred to as \textbf{the Schumpeterian state}, in which the rate of innovators is equal to 
\begin{equation}\label{q12}
    \stratx_I^*=\frac{\improf+2(\aprof-\bprof-\cost) }{\aprof+\bprof},\color{black}
\end{equation} 
and the rate of imitators is given by 
\begin{equation}\label{q12a}
 \stratx_R^*=\frac{2\cost-\aprof-\improf+3\bprof}{\aprof+\bprof}.  \end{equation}
Moreover, the game dynamics \eqref{eq:in-out} converges to the Schumpeterian state $\strat^*$ for any \bv interior \bc initialization. 
\end{proposition}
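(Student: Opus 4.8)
The plan is to reduce the proposition entirely to Theorem~\ref{thm:ESS2strategies} and Corollary~\ref{cor:polymorphic} by matching the Schumpeterian payoff matrix \eqref{eq:sgame-gen} against the generic template \eqref{eq:game-gen}, and then to carry out the short algebraic simplification that turns \eqref{poliESS} into \eqref{q12}–\eqref{q12a}. Concretely, I would first read off the identifications $\gaina = \tfrac{\improf+\aprof-\bprof}{2}-\cost$, $\gainb = \improf+\aprof-\bprof-\cost$, $\gainc = \bprof$, $\gaind = \tfrac{\improf}{2}$, with the strategy $\inno$ playing the role of $s_1$ and $\imit$ the role of $s_2$.

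The key preliminary observation is that the defining condition \eqref{sch} of a Schumpeterian game is, after trivial rearrangement, exactly the anti-coordination condition $\gaina < \gainc$ and $\gainb > \gaind$: the first inequality in \eqref{sch}, $\tfrac{\improf}{2}+\tfrac{\aprof-\bprof}{2} < \bprof+\cost$, is $\gaina < \gainc$, and the second, $\tfrac{\improf}{2}+\aprof > \bprof+\cost$, is $\gainb > \gaind$. Since moreover $\gaina \neq \gainc$ and $\gainb \neq \gaind$ hold strictly, \eqref{eq:sgame-gen} is a legitimate instance of \eqref{eq:game-gen} in the anti-coordination regime, so Theorem~\ref{thm:ESS2strategies}(3) together with Corollary~\ref{cor:polymorphic} apply verbatim: there is a unique polymorphic ESS $\strat^* = (p,1-p)$ given by \eqref{poliESS}, it is globally attracting, and the dynamics \eqref{eq:in-out} converge to it from every polymorphic initial state. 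This already delivers every clause of the proposition except the closed-form expressions for $\stratx_I^*$ and $\stratx_R^*$.

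For those, I would substitute the four identifications into \eqref{poliESS}. The only computation worth doing with care is the common denominator $\gaina - \gainc + \gaind - \gainb$: the $\improf$- and $\cost$-contributions cancel and it collapses to $-\tfrac{\aprof+\bprof}{2}$. Then $\stratx_I^* = p = \tfrac{\gaind-\gainb}{\gaina-\gainc+\gaind-\gainb}$ simplifies to $\tfrac{\improf+2(\aprof-\bprof-\cost)}{\aprof+\bprof}$, i.e. \eqref{q12}, and $\stratx_R^* = 1-p = \tfrac{\gaina-\gainc}{\gaina-\gainc+\gaind-\gainb}$ simplifies to $\tfrac{2\cost-\aprof-\improf+3\bprof}{\aprof+\bprof}$, i.e. \eqref{q12a}; as a consistency check one verifies $\stratx_I^* + \stratx_R^* = 1$. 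Positivity of both rates is automatic: the denominator $\aprof+\bprof$ is strictly positive since $\aprof > \bprof \geq 0$ by \eqref{ab}, and the numerators are positive precisely by the two inequalities of \eqref{sch} (which is just the polymorphy already guaranteed by Corollary~\ref{cor:polymorphic}). I do not expect a genuine obstacle here — the mathematical content sits in the already-proved Theorem~\ref{thm:ESS2strategies}; the only thing to watch is keeping the strategy ordering consistent so that $p$ is really the innovator rate, and stating the rearrangements of \eqref{sch} in the correct direction when matching them to $\gaina<\gainc$ and $\gainb>\gaind$.
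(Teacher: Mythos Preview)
Your proposal is correct and matches the paper's approach exactly: the paper does not give a separate proof of this proposition but introduces it with ``by Theorem~\ref{thm:ESS2strategies} and Corollary~\ref{cor:polymorphic}, we have the following observation,'' having already noted that the anti-coordination condition is precisely \eqref{sch}. You have simply spelled out the identifications and the short algebra that the paper leaves implicit, and all of it checks out.
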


The Schumpeterian state is the final state of the population game in which firms decide whether to innovate or not. It exists and is unique, by Proposition \ref{prop:sstate}, in any Schumpeterian game. In the Schumpeterian state, a population consists of innovators who coexist on the market with firms that do not innovate. The innovator and imitator rates are fixed and depend on the market parameters, which include the cost of innovation ($\cost$), the original market value ($\improf$), and the changes in the market caused by innovation, measured by $\aprof$ and $\bprof$. It follows from the character of the indexes \eqref{q12} and \eqref{q12a} that, ceteris paribus, the higher the cost of innovation, the fewer firms decide to innovate (and more refrain from innovation). On the other hand, the higher the original market value $\improf$, the more innovators and fewer imitators in the market. The concept of the Schumpeterian state that we use corresponds to the definition of an innovation system used by \cite{Andersen2007BridgingTheory}.

Starting from any \bv interior \bc population state, evolution will bring the population close to an equilibrium in which the ratio of both types of agents is constant. Although firms can change their action choices (according to an adopted revision protocol) in the Schumpeterian state, the overall proportion of firms adopting a given strategy cannot change unless market parameters change.

\section{Diversity vs. innovation in a Schumpeterian game} \label{ESS} 
In our model, Schumpeterian competition leads to a stable market structure, called the Schumpeterian state. In such a market, firms using proven technological solutions and firms developing innovations operate together, with rates determining the composition of the firm population set by the game parameters.
Thus, by Proposition \ref{prop:sstate}, to study how the change in diversity caused by innovation affects the fraction of innovators (and imitators) in the population, we can focus on the Schumpeterian state, and we can conclude what the impact of innovation-induced diversity change on firm strategic choices is. 

There are four parameters of a Schumpeterian game. Two of them, not related to the change in diversity, are the original market value $\improf$ and the cost of innovation $\cost$, both of which depend on the market environment, which could be modified through technology policy. Two parameters directly related to innovation are $\alpha>0$, which values growth, and $\beta\geq 0$, which values the reduction in diversity associated with innovation. Those parameters depend mainly on the firm's ability to determine and satisfy consumer needs. 

To investigate how the values of $\alpha$ and $\beta$ affect the Schumpeterian state, specifically the fraction of innovators \eqref{q12}, we denote $\xi:=2\cost-\improf$, and assume it is fixed. Condition \eqref{sch} imposes that 
$\xi\in \left(\aprof-3\bprof,2(\aprof-\bprof)\right)$. Then the rate of innovators \eqref{q12} in the Schumpeterian state is given by 
\begin{equation}\label{gamma}
\stratx_I^*=\frac{2(\aprof-\bprof)-\xi}{\aprof+\bprof}.
\end{equation}
Simple calculations (see the proof in Appendix) lead to the following proposition, which describes the effect of changes in parameters $\alpha$ and $\beta$ on the rate of innovators in the Schumpeterian state.

\begin{proposition}\label{p:rate}
In the Schumpeterian state $\strat^*=(\stratx_I^*,\stratx_R^*)$, the rate of innovators $\stratx_I^*$ increases if $\aprof$ increases (with $\xi$ and $\bprof$ fixed) or $\bprof$ decreases (with $\xi$ and $\aprof$ fixed).
   \end{proposition}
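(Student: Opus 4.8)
The plan is to verify the claim by direct differentiation of the expression \eqref{gamma} for the rate of innovators, treating $\xi$ as a fixed constant. Recall that \eqref{gamma} reads
\[
\stratx_I^*=\frac{2(\aprof-\bprof)-\xi}{\aprof+\bprof},
\]
which is a smooth function of $(\aprof,\bprof)$ on the open region where $\aprof+\bprof>0$ (and in particular on the admissible parameter range $\xi\in(\aprof-3\bprof,\,2(\aprof-\bprof))$, $\aprof>\bprof\geq 0$). So the whole argument reduces to computing two partial derivatives and checking their signs on that range.

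First I would compute $\partial \stratx_I^*/\partial\aprof$ with $\xi$ and $\bprof$ held fixed. Using the quotient rule on $\bigl(2(\aprof-\bprof)-\xi\bigr)/(\aprof+\bprof)$, the numerator of the derivative is $2(\aprof+\bprof)-\bigl(2(\aprof-\bprof)-\xi\bigr)=4\bprof+\xi$, so
\[
\frac{\partial \stratx_I^*}{\partial\aprof}=\frac{4\bprof+\xi}{(\aprof+\bprof)^2}.
\]
To see this is positive I would invoke the left half of the admissibility constraint $\xi>\aprof-3\bprof$, which gives $4\bprof+\xi>4\bprof+\aprof-3\bprof=\aprof+\bprof>0$; hence the derivative is strictly positive and $\stratx_I^*$ is increasing in $\aprof$. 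Next I would compute $\partial \stratx_I^*/\partial\bprof$ with $\xi$ and $\aprof$ held fixed: the numerator of the quotient-rule derivative is $-2(\aprof+\bprof)-\bigl(2(\aprof-\bprof)-\xi\bigr)=-4\aprof+\xi$, so
\[
\frac{\partial \stratx_I^*}{\partial\bprof}=\frac{\xi-4\aprof}{(\aprof+\bprof)^2}.
\]
For this to be negative I need $\xi<4\aprof$; this follows from the right half of the admissibility constraint $\xi<2(\aprof-\bprof)=2\aprof-2\bprof\le 2\aprof<4\aprof$ (using $\bprof\ge 0$ and $\aprof>0$). Hence $\stratx_I^*$ is strictly decreasing in $\bprof$, completing the proof.

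I do not expect any genuine obstacle here — the statement is a one-line calculus fact once \eqref{gamma} is in hand. The only point requiring a little care is that the sign of each derivative is \emph{not} unconditional: it depends on $\xi$ lying in the interval dictated by condition \eqref{sch}, so the proof must explicitly use both endpoints of the range $\xi\in(\aprof-3\bprof,\,2(\aprof-\bprof))$ (equivalently, that the game is a Schumpeterian game and the Schumpeterian state is interior, $\stratx_I^*\in(0,1)$). A secondary bookkeeping point is to make sure the denominator $\aprof+\bprof$ is positive, which is immediate from $\aprof>\bprof\ge 0$, so that the squared denominator never vanishes and the sign is governed entirely by the numerator. With those two observations stated, the monotonicity claims in both variables follow at once.
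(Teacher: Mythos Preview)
Your proof is correct and follows essentially the same approach as the paper: define $\Gamma(\aprof,\bprof)=\bigl(2(\aprof-\bprof)-\xi\bigr)/(\aprof+\bprof)$, compute the two partial derivatives $\partial\Gamma/\partial\aprof=(4\bprof+\xi)/(\aprof+\bprof)^2$ and $\partial\Gamma/\partial\bprof=(\xi-4\aprof)/(\aprof+\bprof)^2$, and check their signs on the admissible domain. If anything, your write-up is more explicit than the paper's, which simply asserts the signs hold on the domain $D$ without spelling out the inequalities you derive from \eqref{sch} and $\aprof>\bprof\ge 0$.
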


\subsection{The increase in diversity and the fraction of innovators in the Schumpeterian state}
The change in diversity related to the replacement of the old technology $\plan$ by its relevant innovation $\plan'$ is measured by the difference $\dive(\{\plan'\})-\dive(\{\plan\})>0$ (see: \eqref{eq:r}). However, in the Schumpeterian state, by Proposition \ref{prop:sstate}, both Schumpeterian strategies are used, and together with firms that introduce innovation, also those that stick to the old technology operate on the market. Since both technologies are used, the attributes of both are available, and the diversity is not reduced. What measures the diversity change occurring from innovation in the Schumpeterian state is therefore the dissimilarity of innovation $\plan'$ from the old technology $\plan$ given by $\distinct(\plan',\plan)$, which indicates how much consumers value the attributes of innovation compared to those of the old product. 
The bigger the dissimilarity  $\distinct(\plan',\plan)$, the higher the value of parameter $\aprof$ (see \eqref{ap}), which determines the increase in the market value resulting from the appearance of the innovation. That means that the better the innovation from the point of view of consumers, the higher $\aprof$. On the other hand, by Proposition \ref{p:rate}, growing $\alpha$ increases the rate of innovators in the Schumpeterian state. Therefore, we obtain the following conclusion.

\begin{corollary}
  \label{prop:s1}  
The greater the increase in diversity caused by innovation, the more innovators are in the Schumpeterian state.
\end{corollary}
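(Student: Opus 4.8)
The statement is a short composition of three facts already in place, so the plan is to make each link precise and check that the comparison is well posed. The first link is to identify the quantity that governs the diversity increase \emph{as seen in equilibrium}. By Proposition~\ref{prop:sstate} the Schumpeterian state $\strat^*$ is polymorphic, so both $\plan$ and $\plan'$ are kept on the market with positive mass; every attribute of either technology is realized and none is lost, hence the raw gap $\dive(\{\plan'\})-\dive(\{\plan\})$ is not what is felt. What the innovation contributes in equilibrium is precisely the block of attributes that $\plan'$ has and $\plan$ lacks, of total weight $\distinct(\plan',\plan)$, which is positive because $\plan'$ is a relevant innovation of $\plan$ (condition \eqref{różnice}, equivalently \eqref{eq:r}). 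So ``a greater increase in diversity caused by innovation'' means, in the model, a larger value of $\distinct(\plan',\plan)$.

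The second link is the modelling assumption \eqref{ap}: $\aprof=\marketv(\distinct(\plan',\plan))$ with $\marketv$ increasing, so a larger $\distinct(\plan',\plan)$ gives a larger $\aprof$. This step can be taken with the other parameters of the Schumpeterian game held fixed: $\bprof=\marketvu(\distinct(\plan,\plan'))$ is a function of the \emph{complementary} dissimilarity $\distinct(\plan,\plan')$, not of $\distinct(\plan',\plan)$, and $\xi:=2\cost-\improf$ involves neither; thus raising the equilibrium-relevant diversity gap amounts exactly to the ceteris-paribus move ``increase $\aprof$ with $\xi$ and $\bprof$ fixed'' considered in Section~\ref{ESS}.

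The third link is Proposition~\ref{p:rate} (equivalently, differentiating the closed form \eqref{gamma} in $\aprof$): under this move $\stratx_I^*$ strictly increases. Chaining the three links yields the claim: more innovation-induced diversity $\Rightarrow$ larger $\distinct(\plan',\plan)$ $\Rightarrow$ larger $\aprof$ $\Rightarrow$ larger $\stratx_I^*$. The only genuine point of care --- which I would state explicitly rather than bury --- is that this is a comparative-statics assertion \emph{across} Schumpeterian games: one must note that both the smaller-$\aprof$ and the larger-$\aprof$ configurations still satisfy the standing condition \eqref{sch} (so that Proposition~\ref{prop:sstate} applies and $\strat^*$ is genuinely the unique globally attracting Schumpeterian state in each), and that the monotonicity of Proposition~\ref{p:rate} persists throughout the admissible range $\xi\in(\aprof-3\bprof,\,2(\aprof-\bprof))$. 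The latter reduces to the single inequality $4\bprof+\xi>0$, which follows from $\xi>\aprof-3\bprof\ge-3\bprof\ge-4\bprof$ (using $\aprof>0$ and $\bprof\ge0$); everything else is bookkeeping quoted from Propositions~\ref{prop:sstate} and~\ref{p:rate}. I expect this last consistency check, rather than any computation, to be the only thing deserving a sentence of justification.
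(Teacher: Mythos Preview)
Your proposal is correct and follows essentially the same route as the paper: the text preceding the corollary argues (i) that in the Schumpeterian state the relevant diversity gain is $\distinct(\plan',\plan)$, (ii) that \eqref{ap} translates a larger $\distinct(\plan',\plan)$ into a larger $\aprof$, and (iii) that Proposition~\ref{p:rate} then yields a larger $\stratx_I^*$; the appendix proof (labeled \emph{Proof of Proposition~\ref{prop:s1}}) supplies the partial-derivative computation underpinning step~(iii). Your write-up is in fact a bit more careful than the paper's, since you explicitly verify the sign condition $4\bprof+\xi>0$ and flag the need for both compared configurations to satisfy \eqref{sch}, points the paper leaves implicit.
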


\bb

Since the Schumpeterian state determines a mixed Nash equilibrium of a population game, the payoffs from both strategies used by firms in the Schumpeterian state $\strat^*$ are equal. 
The payoff for the population is an affine function of the ratio of innovators $\stratx_I^*$, and its monotonicity depends on the relationship between the share of the imitators in the market with innovation measured by $\bprof$ and the aggregated value of the market without innovation ${\improf}$. By \eqref{payoff2}
\begin{equation}\label{payoffp}  
    u(\strat^*)=u(\stratx_I^*,\stratx_R^*)=\stratx_I^*\left(\bprof-\frac{\improf}{2}\right)+\frac{\improf}{2}. 
\end{equation}

Thus, when $\bprof < \frac{\improf}{2}$, an increase in the ratio of innovators results in a lower population payoff. 
Moreover, in the extreme case of $\beta=0$, the growth of the ratio of innovators leads the population's payoff to zero. 
 However, when the inequality $\beta>\frac{V}{2}$ holds, the population payoff grows with the increasing mass of innovators. In this case, innovators exploiting a certain market niche enable imitators to take over the market of the old good, making all players benefit from the innovation. Hence, (anti)coordination of firms' activities is crucial for the creation of the Schumpeterian state. 
  Therefore, we can supplement Corollary \ref{prop:s1} with the following observation. 

\begin{corollary}
  \label{cor:imit} Provided that the imitators' share in the market is large enough, the more innovators in the Schumpeterian state, the higher the population payoff.

\end{corollary}
 It can be noted that the above-mentioned division of Schumpeterian games, depending on the relationship between parameters $\bprof$ and  ${\improf}$, corresponds to one of possible general divisions of population games: conflict games and anti-coordination games, see \cite{herold2020evolution}.


\bc

Considering Schumpeterian competition as a population game in which firms decide whether to invest in innovation or to abandon the idea, we conclude that diversity determines the result of the game. Not only does it affect the structure of the market that emerges as a result of competition, but it also determines whether there is any stable market structure at all (see condition \eqref{sch}). 
Our approach to diversity is based on the consumers' values for the different product attributes. The firm that meets the consumers' expectations regarding the new product or technology can temporarily gain a high extra payoff, which is then reduced by others learning to do the same by applying the given revision protocol (described in Section 3.1). In the long run, firms' decisions may lead some markets to a Schumpeterian state. 
Its form depends on external conditions that decide the cost of innovation and the original market value; however, it also depends on the firm that develops the innovation itself. Eventually, the more valuable an innovation a firm develops, the more firms will follow it, and the final form of the market will consist of more new technology-using firms.

\section{Conclusions}

In this paper, we analyze the impact of a change in diversity introduced by a new product on the evolution of the economic system. To this aim, we considered a Schumpeterian game in which firms decide on a specific Schumpeterian strategy, and the innovation-induced diversity change (measured by dissimilarity of products) modifies the structure of the agent population by affecting the firms' strategic choices. We showed that innovations that meet consumers needs by increasing economic diversity can create a stable and attracting population structure, which is an evolutionarily stable state.
In this state (called the Schumpeterian state), innovators and firms that use proven technological solutions (imitators) coexist. In addition, the fraction of innovators depends on the change in diversity (and dissimilarity of a new product) associated with innovation. The greater the increase in diversity that innovative technology provides, compared to the diversity of the technology previously used, the more firms decide to develop new technologies.

By analyzing the impact of increasing diversity on the structure of firms in a Schumpeterian game, we have shown that firms are more likely to introduce new technologies that offer greater diversity growth. This implies that the increase in diversity caused by innovation can be seen as an encouragement for firms to innovate.(Moreover, in reality, one usually observes only how many (or what fraction of) firms decide to innovate, and thus analyzing Schumpeterian states can shed light on the real costs and profits associated with innovation). Thus, if an innovation can be introduced but its cost is high, the change in diversity (the dissimilarity) of the new product from  previously used solutions may be a motivation for its introduction. Therefore, the high cost of innovation can be mitigated by its relevance. This is consistent with the perspective of many pro-ecological solutions introduced in response to climate change, etc.
\bv Thus\bc, we argue that product diversity should be considered one of the elements influencing the process of creative destruction, also in Schumpeterian growth models, which can be viewed as infinite sequences of Schumpeterian games.

\bv 
\subsection{More strategies.} We identified the strategy space of the Schumpeterian game with the set consisting of two strategies. From the economic perspective, the reason was that we focused on the interaction in the population between innovators and imitators and their motivations to choose to innovate or not. Nevertheless, extensions of the model may be considered in various directions. What if we add a new strategy? Consider a third strategy: the retaliator (counterattack) strategy ($\reta$). A player using this strategy behaves like an innovator when he meets an innovator and retracts from innovation if the opponent does not innovate. (This game is motivated by Hawk-Dove-Retaliator game (see \cite{bomze1983lotka,kleshnina2022shifts}).) Then the payoff matrix of this game with strategies $S=\{ \inno, \imit, \reta \}$ takes the form
  \begin{equation}
\label{eq:sgame-gen3}
\begin{array}{l|ccc}  
	&\inno	&\imit& \reta\\
\hline 
\inno	&\frac{\improf+\aprof-\bprof}{2}-\cost &\improf+\aprof-\bprof-\cost &\frac{\improf+\aprof-\bprof}{2}-\cost \\
\imit	&\bprof	& \frac{\improf}{2}& \frac{\improf}{2} \\
\reta & \frac{\improf+\aprof-\bprof}{2}-\cost & \frac{\improf}{2}& \frac{\improf}{2}
\end{array},
\end{equation}
with $\improf>0$, $\cost>0$, and $\aprof>\bprof\geq 0$.

Even in such a simple case, the behavior can be significantly different. Consider the numerical example of \eqref{eq:sgame-gen3} with $V=8, \alpha=4,\beta=2$ and $C_I=4$. Then the game is given by 

  \begin{equation}
\label{eq:sgame-gen3num}
\begin{array}{l|ccc}  
	&\inno	&\imit& \reta\\
\hline 
\inno	&1 &6 &1 \\
\imit	&2	& 4& 4 \\
\reta & 1 & 4& 4
\end{array},
\end{equation}

In Figure \ref{fig:fig1} we present the phase portrait for game \eqref{eq:sgame-gen3num}, where players are using pairwise proportional imitation revision protocol \eqref{eq:PPI}. We see that
\begin{enumerate}
    \item the Schumpeterian state from 2-strategy game \eqref{eq:game-gen} (by Proposition \ref{prop:sstate} given by $(\stratx_{\inno}^*,\stratx_{\imit}^*)=(2/3,1/3)$) sets the stable solution of this extended game.
    \item Although the population state $(x_I^*,x_R^*,0)$ attracts, it only attracts locally.
    \item There exist initial population states, for which the trajectories (driven by replicator dynamics) will converge to the population state with no innovators -- consisting only of imitators and retaliators.
\end{enumerate}

\begin{figure}[h!]
\begin{center}
\includegraphics[width=110truemm]{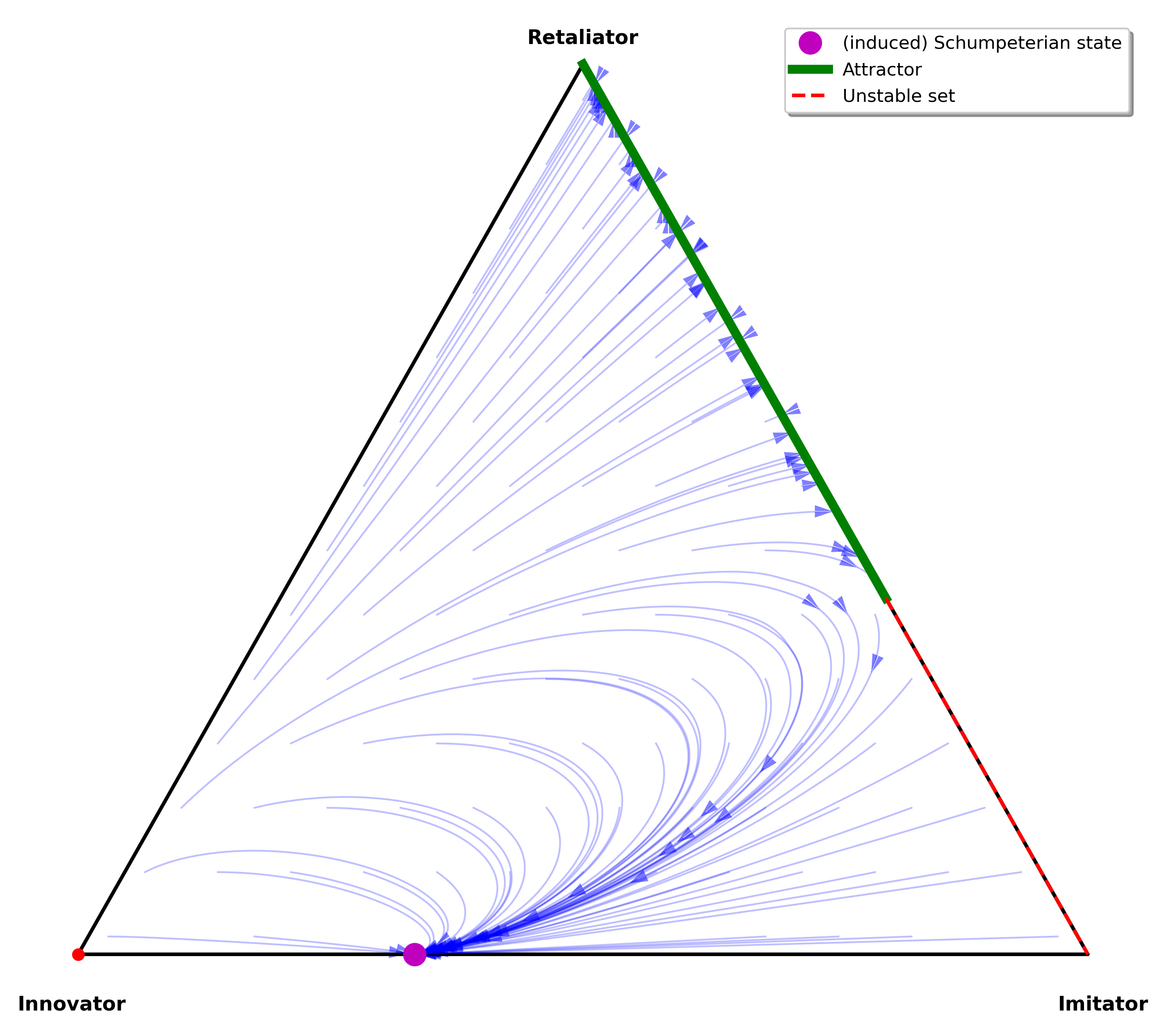}
\caption{Phase portrait of the game \eqref{eq:sgame-gen3num} where agents are using pairwise proportional imitation \eqref{eq:PPI}. Schumpeterian state of the game \eqref{eq:sgame-gen} $\strat^*=(\stratx_{\inno}^*,\stratx_{\imit}^*)$ determines ESS of game \eqref{eq:sgame-gen3num} given by $\overline{\strat}=(\stratx_{\inno}^*,\stratx_{\imit}^*,0)$. In contrary to the 2-strategy case (see Proposition \ref{prop:sstate}) $\overline{\strat}$ is not globally attracting. Part of the simplex is attracted by a stable section consisting of imitators and retaliators.} \label{fig:fig1}
\end{center}
\end{figure}

We can extend this reasoning as long as assumptions of Proposition \ref{prop:sstate} are met (that is, as long as there exists an interior ESS of the game \eqref{eq:sgame-gen}). Interestingly, population state $\overline{\strat}=(x_I^*,x_R^*,0)$, which coincides with the Schumpeterian state determined in Proposition \ref{prop:sstate}, is a stable solution of the extended game \eqref{eq:sgame-gen3}. Nevertheless, it is no longer globally attracting. 
Moreover, by \cite{bomze1983lotka,San10,sandholm2010local}, we can infer that the population state $\overline{\strat}$, where ratios of innovators and imitators coincide with the Schumpeterian state $\strat^*$  (and no player is a retaliator), locally attracts for a large class of revision protocols (in particular under the best response rule).  Nevertheless, for a large set of initial conditions, the trajectories will converge to the population state consisting solely of imitators and retaliators. Eventually, no innovators will be observed then.  

 
 \subsection{Future work} \bc This article serves as a direct interconnection between the relevance of innovation and the diversity of a new product. In future work, one may take into account the impact of diversity (and innovation) on market entry and market structure itself, as well as the external effects of innovation. The proposed model of a Schumpeterian game can be extended in the context of stable games (see \citep{hofbauer2009stable}), games characterized by self-defeating externalities. 
\bv 
Finally, one can ask what happens when diversity is generated endogenously through the strategic choice of attributes. In particular, how the system evolves (and how payoffs in population change) when a new attribute is added. In this context, as the set of available strategies changes, one should use a different approach, see e.g. \cite{edhan2025game}. As the game dynamics changes one can consider then potential games and gradient descent arguments for convergence. \bc

{\bf Acknowledgments.} Elżbieta Pliś and Fryderyk Falniowski acknowledge the support of the National Science Centre, Poland, grant 2023/51/B/HS4/01343. We are indebted to Agnieszka Lipieta and M. Ali Khan for many fruitful discussions and encouragement.

\bibliographystyle{abbrvnat} 
\bibliography{ms,IEEEabrv,Bibliography,ela}

\section*{Appendix}

\begin{proof}[Proof of Theorem \ref{thm:ESS2strategies}]
First, let us consider Nash equilibria of the game \eqref{eq:game-gen}. If $(\gaina-\gainc)(\gaind-\gainb)<0$, that is in cases (1) and (2), one of the pure strategies is strictly dominated by the other, which results in the dominance-solvable game. Thus, there exists a unique (strict) Nash equilibrium -- either $(0,1)$ (when $\gaina<\gainc$ and $\gainb<\gaind$) or $(1,0)$ (when $\gaina>\gainc$ and $\gainb>\gaind$). This determines a unique (monomorphic) evolutionarily stable state. 
On the other hand, once $(a-c)(d-b)>0$, the picture is slightly more complicated. In such a case, there exists an interior (mixed) Nash equilibrium of the form  
\begin{equation}
\label{eq:eqN}
(\eq,1-\eq)=\left(\frac{d-b}{a-c+d-b},\frac{a-c}{a-c+d-b}\right),
\end{equation}
In particular, when $a>c$ and $d>b$, the game \eqref{eq:game-gen} is a coordination game with an additional two pure Nash equilibria, that is $(0,1)$ and $(1,0)$. 
Finally, when $a<c$ and $d<b$, game \eqref{eq:game-gen} is an anti-coordination game with a unique Nash equilibrium given by \eqref{eq:eqN}. 

Second, let us check which of the Nash equilibria (and when) is attracting in the game dynamics defined by \eqref{eq:in-out}. 
In case of a population game \eqref{eq:game-gen}, the mean dynamics given by \eqref{eq:in-out}, is boiled down to the equation
\begin{equation}
\label{eq:2mean} \dot{\stratx}=(1-\stratx)\switch_{BA}(\stratx)-\stratx \switch_{AB}(\stratx),
\end{equation}
where $\stratx=\stratx_{A}$ denotes the rate of the population that use strategy $A$, and $1-\stratx$ is the population rate of $B$-strategists. Although payoff functions and conditional switch rates  depend on $\strat=(\stratx,1-\stratx)$, to simplify the notation, in the rest of the article we will treat all functions as dependent only on population rate of $\stratx $. 
Since $\switch_{AB}(\stratx)=0$ in case of $\pay_B(\stratx)<\pay_A(\stratx)$, and $\switch_{BA}(\stratx)=0$ if $\pay_A(\stratx)<\pay_B(\stratx)$, we get
\[\dot{\stratx}=\begin{cases}
   (1-\stratx)\switch_{BA}(\stratx)& \text{if }\pay_A(\stratx)>\pay_B(\stratx)\\   
      -\stratx\switch_{AB}(\stratx)& \text{if }\pay_A(\stratx)<\pay_B(\stratx)\\
\end{cases}.\]
As a consequence, we have
 \begin{equation}\label{eq:dot-pay1} \dot{\stratx}<0,\;\text{ if }  \pay_A(\stratx)< \pay_B(\stratx),\text{ and }
 \dot{\stratx}>0,\;\text{ if } \pay_A(\stratx)> \pay_B(\stratx).\end{equation}
Thus, in case of $(a-c)(d-b)<0$, from the first part of the proof and \eqref{eq:dot-pay1}, the unique Nash equilibrium is globally attracting. 
On the other hand, if $(a-c)(d-b)>0$, then $\pay_A(\stratx)= \pay_B(\stratx)$ only when $\stratx=\eq$ with $\eq$ given by \eqref{eq:eqN}.
Then $a>c$ and $d>b$ guarantee that  $\pay_A(\stratx)< \pay_B(\stratx)$ for $\stratx<\eq$, and $\pay_A(\stratx)> \pay_B(\stratx)$ for $\stratx>\eq$. 
Thus, by \eqref{eq:dot-pay1}, all the states of the population, except for $(\eq,1-\eq)$, are attracted by one of the Nash equilibria: $(0,1)$ or $(1,0)$.
Eventually,  when $a<c$ and $d<b$, once again by \eqref{eq:dot-pay1}, the interior fixed point $\eq$ attracts all points from the open interval $(0,1)$. 
So, for every interior initialization $\stratx(0)\in (0,1)$ the dynamics \eqref{eq:2mean} converges to $\eq$, and the \bv interior \bc population state $(\eq,1-\eq)$ attracts all other states except for monomorphic states $(0,1)$ and $(1,0)$.

Finally, to show that the attracting fixed points of dynamics \eqref{eq:in-out} determine evolutionary stable states, we use Theorem 9.8 from \citep{webb2007game}, which states that any asymptotically stable point of replicator dynamics is also an ESS.  By \eqref{eq:dot-pay1}, the sign of the derivative in equation \eqref{eq:in-out} agrees with the sign of the derivative in the replicator dynamics (as we mentioned before, this equation can be obtained from \eqref{eq:in-out} via pairwise proportional imitation \eqref{eq:PPI}):
\[\dot{\stratx}=\stratx (1-\stratx)(\pay_A(\stratx)-\pay_B(\stratx)),\]
thus, the methods used in the proof of Theorem 9.8 in \citep{webb2007game} work out in our (more general) case. 
This completes the proof.
 
\end{proof}

\begin{proof}[Proof of Proposition \ref{prop:s1}]
Define a function \begin{equation}\label{function}    
\Gamma\colon \domain\ni(\aprof,\bprof)\mapsto \Gamma(\aprof,\bprof)\in(0,1), 
\end{equation} with \begin{equation}\label{domain}
\domain=\left\{
(\aprof,\bprof)\in \mathbb R^2\colon\frac{\aprof}{3}-\frac{\xi}{3}< \bprof<\aprof-\frac{\xi}{2};\;\; \aprof >0
\right\},
\end{equation}
and given by 
\begin{equation}
\Gamma(\aprof,\bprof):=\frac{2(\aprof-\bprof)-\xi}{\aprof+\bprof}.
\end{equation}
 The form of $\domain$ results from the condition \eqref{sch} that restricts our analysis to the population games of the form of the anti-coordination games, in which there exists the (unique) \bv interior \bc and evolutionarily stable form of population.
 
Partial derivatives of $\Gamma$ are
\begin{equation}\label{prox1}
\frac{\partial \Gamma}{\partial \aprof}(\aprof,\bprof)=\frac{4\bprof+\xi}{(\aprof+\bprof)^2} \;\;\;\text{ and }\;\;\;\;\;
\frac{\partial \Gamma}{\partial \bprof}(\aprof,\bprof)=\frac{-4\aprof+\xi}{(\aprof+\bprof)^2}.
\end{equation}
Since $\frac{\partial \Gamma}{\partial \aprof}(\aprof,\bprof)>0$, and $\frac{\partial \Gamma}{\partial \bprof}(\aprof,\bprof)<0$ when $(\aprof,\bprof)\in \domain$, the function $ \Gamma (\cdot,\bprof)\colon \aprof \mapsto \Gamma(\aprof,\bprof)$ increases and $ \Gamma (\aprof,\cdot)\colon \bprof \mapsto \Gamma(\aprof,\bprof)$ decreases with $(\aprof,\bprof)\in \domain$. This completes the proof of this proposition.
\end{proof}

\end{document}